\newcommand{\reals}{\mathbb{R}}
\newcommand{\complex}{\mathbb{C}}
\newcommand{\integers}{\mathbb{Z}}
\newcommand{\angles}[1]{\left\langle #1 \right\rangle}
\newcommand{\norm}[1]{\left|\left|#1\right|\right|}
\newcommand{\para}[1]{\left(#1\right)}
\newcommand{\paraa}[1]{\big(#1\big)}
\newcommand{\parab}[1]{\Big(#1\Big)}
\newcommand{\parac}[1]{\bigg(#1\bigg)}
\newcommand{\parad}[1]{\Bigg(#1\Bigg)}
\newtheorem{theorem}{Theorem}[section]
\newtheorem{lemma}[theorem]{Lemma}
\newtheorem{proposition}[theorem]{Proposition}
\theoremstyle{definition}
\newtheorem{definition}[theorem]{Definition}
\theoremstyle{remark}
\numberwithin{equation}{section}
\renewcommand{\d}{\partial}
\newcommand{\h}{\hbar}
\renewcommand{\mid}{\mathds{1}}
\newcommand{\Ws}{W^*}
\newcommand{\Ls}{\Lambda^*}
\newcommand{\zb}{\bar{z}}
\newcommand{\qb}{\bar{q}}
\newcommand{\mv}{\vec{m}}
\newcommand{\nv}{\vec{n}}
\newcommand{\Tm}{T_{\mv}}
\newcommand{\Sn}{S_{\nv}}
\newcommand{\Zt}{\tilde{Z}}
\newcommand{\ximn}{\xi_{m,n}}
\newcommand{\Amutheta}{A_{\mu,\theta}}
\newcommand{\eps}{\varepsilon}
\renewcommand{\S}{\mathcal{S}}
\newcommand{\Chmu}{\mathcal{C}_{\hbar,\mu}}
\newcommand{\Ch}{\hat{C}}
\newcommand{\Xt}{\tilde{X}}
\newcommand{\Yt}{\tilde{Y}}
\renewcommand{\Zt}{\tilde{Z}}
\newcommand{\Wt}{\tilde{W}}
\newcommand{\thetat}{\tilde{\theta}}
\newcommand{\Azmut}{A^0_{\mu,\theta}}
\newcommand{\Us}{U^\ast}
\title[Deformed noncommutative tori]{Deformed noncommutative tori}
\author{Joakim Arnlind}
\address[Joakim Arnlind]{Department of Mathematics\\
Link\"oping University\\
581 83 Link\"oping\\
Sweden}
\email{joakim.arnlind@liu.se}
\author{Harald Grosse}
\address[Harald Grosse]{Mathematical Physics\\
Boltzmanngasse 5\\
1090 Vienna\\
Austria}
\email{harald.grosse@univie.ac.at}
\thanks{}
\subjclass[2000]{}
\keywords{}
\begin{document}

\begin{abstract}
  We recall a construction of non-commutative algebras related to a
  one-parameter family of (deformed) spheres and tori, and show that
  in the case of tori, the $\ast$-algebras can be completed into
  $C^\ast$-algebras isomorphic to the standard non-commutative
  torus. As the former was constructed in the context of matrix (or
  fuzzy) geometries, it provides an important link to the framework of
  non-commutative geometry, and opens up for a concrete way to study
  deformations of non-commutative tori.

  Furthermore, we show how the well-known fuzzy sphere and fuzzy torus
  can be obtained as formal scaling limits of finite-dimensional
  representations of the deformed algebras, and their projective modules are
  described together with connections of constant curvature.
\end{abstract}

\maketitle

\section{Introduction}

\noindent In \cite{abhhs:fuzzy,abhhs:noncommutative}, sequences of
matrix algebras where constructed as ``fuzzy'' analogues of surfaces
in $\reals^3$. In particular, a one parameter family of surfaces,
interpolating between spheres tori, was considered and all finite
dimensional (hermitian) representations of the corresponding
non-commutative algebras were found and classified. It was shown that
the representation theory clearly reflects the topology of the
surfaces and that smooth deformations of the geometry induce smooth
changes in the representations (see also \cite{a:repcalg,as:affine}).

One may wonder if there is a relation between the non-commutative
algebras constructed in this way and the $C^\ast$-algebra representing
the standard non-commutative torus \cite{c:cstardiff}? Although the
construction of the two algebras is completely different (and for the
deformed algebras, there is a priori no concept of norm), they
correspond to the same (non-commutative) topology. From this point of
view a $C^*$-algebraic isomorphism between the two kind of algebras
would be natural.

In this note, we review the construction of the non-commutative
algebras related to (deformed) spheres and tori, and show that there
exist formal scaling limits in which the representations become the
standard fuzzy sphere and torus. Furthermore, we construct a basis in
which one can complete the deformed torus algebras into
$C^\ast$-algebras isomorphic to the standard non-commutative torus.
Finally, we consider projective modules of the
non-commutative torus in this framework, together with
connections of constant curvature.

\section{Construction of  noncommutative algebras}

\noindent Let us briefly recall how to construct non-commutative
algebras related to level sets of a polynomial in $\reals^3$
\cite{abhhs:noncommutative,abhhs:fuzzy,a:repcalg,as:affine,a:phdthesis}. Given
a polynomial $C\in\reals[x,y,z]\equiv\reals[x^1,x^2,x^3]$, one can
define a Poisson bracket by setting
\begin{align}
  \{f,g\}=\nabla C\cdot\paraa{\nabla f\times\nabla g},
\end{align}
for $f,g\in C^\infty(\reals^3)$. In particular, it follows that
$\{x^i,x^j\}=\eps^{ijk}\d_kC$. By construction, the polynomial $C(x,y,z)$
Poisson commutes with all functions, which implies that the Poisson
structure restricts to the inverse image $C^{-1}(0)$. Thus, it defines
a Poisson structure on the quotient algebra
$\reals[x,y,z]/\angles{C(x,y,z)}$, which can be identified with
polynomial functions on $C^{-1}(0)$.

To construct a non-commutative version of the above algebra, one
starts with the free (non-commutative) associative algebra
$\complex[X,Y,Z]$ and imposes the relations
\begin{align*}
  &[X,Y] = i\hbar :\d_zC:\\
  &[Y,Z] = i\hbar :\d_xC:\\
  &[Z,X] = i\hbar :\d_yC:
\end{align*}
where $\hbar\in\reals$ and $:\d_i C:$ denotes a choice of ordering of
the (commutative) polynomial $\d_iC$.  In
\cite{abhhs:fuzzy,abhhs:noncommutative}, the authors considered the
polynomial
\begin{align}\label{eq:torusSpherePolynomial}
  C(x,y,z)=\frac{1}{2}\paraa{x^2+y^2-\mu}^2+\frac{1}{2}z^2-\frac{1}{2}
\end{align}
whose inverse image $C^{-1}(0)$ describes a sphere for $\mu<1$ and a
torus for $\mu>1$. One computes that 
\begin{align*}
  \{x,y\} = z\qquad
  \{y,z\} = 2x(x^2+y^2-\mu)\qquad
  \{z,x\} = 2y(x^2+y^2-\mu)
\end{align*}
and the corresponding non-commutative relations were chosen as
\begin{align}
  &[X,Y] = i\hbar Z\label{eq:XYCom}\\
  &[Y,Z] = i\hbar\parab{ 2X^3+XY^2+Y^2X-2\mu X}\label{eq:YZCom}\\
  &[Z,X] = i\hbar\parab{ 2Y^3+YX^2+X^2Y-2\mu Y}.\label{eq:ZXCom}
\end{align}
The algebra is then defined as $\Chmu = \complex[X,Y,Z]\slash I$,
where $I$ is the two-sided ideal generated by the above relations.
Using the ``Diamond Lemma'' \cite{b:diamondlemma}, it was proved that
$\Chmu$ is a non-trivial algebra for which a basis can be computed
\cite{abhhs:noncommutative}. We shall also consider $\Chmu$ to be a
$\ast$-algebra with $X^\ast=X$, $Y^\ast=Y$ and $Z^\ast=Z$.

In the Poisson algebra, the polynomial $C$ is a Poisson central
element of the algebra. It turns out that a non-commutative analogue
of $C$ is a central element in $\Chmu$. Namely, by setting
\begin{align}\label{eq:ncCentralElement}
  \Ch = \paraa{X^2+Y^2-2\mu\mid}^2+Z^2 
\end{align}
one computes that $[X,\Ch]=[Y,\Ch]=[Z,\Ch]=0$. Thus, in analogy with
(\ref{eq:torusSpherePolynomial}) it is natural to also impose
$\Ch=\mid$ in $\Chmu$. As we shall see, the
presentation of the algebra in terms of $X$, $Y$ and $Z$, is
appropriate when comparing with spherical geometries (and the ``fuzzy
sphere''); there is, however, another choice of basis which naturally
makes contact with non-commutative tori. By setting $W=X+iY$ and
eliminating $Z=\frac{1}{i\hbar}[X,Y]$, the remaining relations may be
written as
\begin{align}
  &\para{W^2\Ws+\Ws W^2}(1+\h^2)=4\mu\h^2W+2(1-\h^2)W\Ws W\label{eq:W}\\
  &\frac{1}{4}\paraa{W\Ws+\Ws W-2\mu\mid}^2+\frac{1}{4\h^4}\paraa{W\Ws-\Ws W}^2=\mid.
\end{align}
Furthermore, by introducing
\begin{align*}
  &\Lambda = \frac{1}{2\h}\paraa{W\Ws-\Ws W}+
  \frac{i}{2}\paraa{W\Ws+\Ws W-2\mu\mid}
\end{align*}
the above algebra can be presented as
\begin{align}
  &W\Lambda=q\Lambda W\quad;\quad \Ws\Lambda = \qb\Lambda \Ws\label{eq:arel1}\\
  &\Ws\Ls = q\Ls\Ws\quad;\quad W\Ls=\qb\Ls W\label{eq:arel2}\\
  &\Ls \Lambda=\Lambda\Ls = \mid\label{eq:arel3}\\
  &W\Ws = z\Lambda +\zb\Ls+\mu\mid\label{eq:arel4}\\
  &\Ws W = -\zb\Lambda - z\Ls + \mu\mid,\label{eq:arel5}
\end{align}
where $q=e^{2\pi i\theta}$, $z = e^{i\pi\theta}/2i\cos\pi\theta$ and
$\theta$ is related to $\hbar$ via $\hbar=\tan\pi\theta$.

Note that the above relations are quite similar to those of the
standard non-commutative torus, generated by two unitary
operators. The difference is the deformed unitarity of the operator
$W$. Let us now define the algebra together with the parameter ranges
that we shall be interested in.

\begin{definition}
  Let $\mu,\theta\in\reals$ such that $\mu>0$ and
  $|\mu\cos\pi\theta|>1$. By $\Azmut$ we denote the quotient of the
  (unital) free $\ast$-algebra $\complex\angles{W,\Ws,\Lambda,\Ls}$
  and the two-sided ideal generated by relations
  (\ref{eq:arel1})--(\ref{eq:arel5}).
\end{definition}

\noindent Again, one can make use of the Diamond lemma
\cite{b:diamondlemma} to explicitly compute a basis of $\Azmut$.

\begin{proposition}\label{prop:basis}
  A basis for $\Azmut$ is given by
  \begin{align*}
    &\Tm = q^{m_1m_2/2}\Lambda^{m_1}W^{m_2}\\
    &\Sn = q^{-n_1n_2/2}\Lambda^{n_1}(\Ws)^{n_2}
  \end{align*}
  where $\mv=(m_1,m_2)\in\integers\times\integers_{\geq 0}$ and
  $\nv=(n_1,n_2)\in\integers\times\integers_{\geq 1}$. Moreover, it
  holds that
  \begin{align*}
    &T_{\mv}T_{\nv} = q^{-\mv\times\nv/2}T_{\mv+\nv}\\
    &S_{\mv}S_{\nv} = q^{\mv\times\nv/2}S_{\mv+\nv}
  \end{align*}
  where $\mv\times\nv=m_1n_2-n_1m_2$.
\end{proposition}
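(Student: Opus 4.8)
The plan is to apply Bergman's Diamond Lemma \cite{b:diamondlemma} to the presentation given by (\ref{eq:arel1})--(\ref{eq:arel5}). First I would turn each relation into a reduction rule by orienting it with respect to the degree-lexicographic order on words in $W,\Ws,\Lambda,\Ls$ induced by the generator ordering $W>\Ws>\Lambda>\Ls$ (comparing words first by length, then lexicographically from the left). Under this order the leading monomials to be rewritten are $W\Ws$, $\Ws W$, $W\Lambda$, $\Ws\Lambda$, $W\Ls$, $\Ws\Ls$, $\Lambda\Ls$ and $\Ls\Lambda$, with reductions $W\Ws\to z\Lambda+\zb\Ls+\mu\mid$, $\Ws W\to -\zb\Lambda - z\Ls+\mu\mid$, $W\Lambda\to q\Lambda W$, $\Ws\Lambda\to\qb\Lambda\Ws$, $\Ws\Ls\to q\Ls\Ws$, $W\Ls\to\qb\Ls W$, together with $\Lambda\Ls\to\mid$ and $\Ls\Lambda\to\mid$. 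Each rule strictly lowers a word in this order, which is a semigroup partial order satisfying the descending chain condition, so every reduction sequence terminates.

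Next I would identify the irreducible words. The four commutation rules push every factor of $\Lambda$ or $\Ls$ to the left of every $W$ and $\Ws$, the rules $\Lambda\Ls\to\mid$ and $\Ls\Lambda\to\mid$ collapse adjacent inverse pairs, and the rules for $W\Ws$ and $\Ws W$ forbid any adjacency of $W$ with $\Ws$. Hence a word is irreducible precisely when it is a power of $\Lambda$ (reading $\Ls=\Lambda^{-1}$) followed by a pure power of $W$ or a pure power of $\Ws$; these are exactly the monomials $\Lambda^{m_1}W^{m_2}$ with $(m_1,m_2)\in\integers\times\integers_{\geq 0}$ and $\Lambda^{n_1}(\Ws)^{n_2}$ with $(n_1,n_2)\in\integers\times\integers_{\geq 1}$, the restriction $n_2\geq 1$ serving to avoid double-counting the pure powers of $\Lambda$. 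Multiplying by the scalar normalizations $q^{m_1m_2/2}$ and $q^{-n_1n_2/2}$ then produces the claimed $\Tm$ and $\Sn$.

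The heart of the argument is checking that all overlap ambiguities resolve; there are no inclusion ambiguities, since every leading monomial has length two and so cannot properly contain another. Grouping the twelve length-three overlaps by their middle letter, I would verify that each reduces to a common value regardless of which rule is applied first. Most cases are routine and use only $\Lambda\Ls=\Ls\Lambda=\mid$ together with the commutation rules (for instance $W\Lambda\Ls$, $\Ls\Lambda\Ls$, $\Ws W\Lambda$ and $W\Ls\Lambda$). The main obstacle is the pair of genuinely mixed overlaps $W\Ws W$ and $\Ws W\Ws$: reducing the left versus the right adjacency yields two a priori different linear combinations of $\Lambda W,\Ls W$ (respectively $\Lambda\Ws,\Ls\Ws$), and equality forces the scalar identities $-\zb=z\qb$ and $-z=\zb q$ together with their conjugates. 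I expect these to hold exactly because $z = e^{i\pi\theta}/2i\cos\pi\theta$ and $q=e^{2\pi i\theta}$ are tuned so that $z\qb=-\zb$; this is the one place where the specific values of $z$ and $q$, and hence the relation $\hbar=\tan\pi\theta$, actually enter, and confirming it is the crux of the proof.

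Finally, with the basis established by the Diamond Lemma, I would read off the multiplication rules by direct computation. Writing $\Tm\Tn=q^{(m_1m_2+n_1n_2)/2}\Lambda^{m_1}W^{m_2}\Lambda^{n_1}W^{n_2}$ and moving $W^{m_2}$ past $\Lambda^{n_1}$ via $W\Lambda=q\Lambda W$ produces the factor $q^{m_2n_1}$ and the monomial $\Lambda^{m_1+n_1}W^{m_2+n_2}$; collecting the resulting exponent against the normalization of $T_{\mv+\nv}$ gives $\Tm\Tn=q^{-\mv\times\nv/2}T_{\mv+\nv}$. The identical computation, now using $\Ws\Lambda=\qb\Lambda\Ws$, yields $\Sm\Sn=q^{\mv\times\nv/2}S_{\mv+\nv}$, completing the proof.
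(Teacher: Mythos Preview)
Your proposal is correct and follows the same Diamond Lemma strategy as the paper: the same eight reduction rules, a degree-then-lexicographic ordering that pushes $\Lambda,\Ls$ to the left, resolution of the length-three overlap ambiguities, and a direct commutation count for the product formulas. Your write-up is in fact more explicit than the paper's on two points---you enumerate all twelve overlaps and you isolate the scalar identity $z\qb=-\zb$ (equivalently $\zb q=-z$) needed for $W\Ws W$ and $\Ws W\Ws$, which is indeed where the specific choice of $z$ enters---while the paper only displays the $W\Ws\Lambda$ check and uses a slightly different (partial) monomial order with alphabet $\Lambda<\Ls<W<\Ws$; both orderings are compatible with the reduction system.
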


\begin{proof}
  To prove that $\Tm$ and $\Sn$ provide a basis for the algebra, we
  make use of the ``Diamond Lemma'', and refer to
  \cite{b:diamondlemma} for details. Thus, relations
  (\ref{eq:arel1})--(\ref{eq:arel5}) are put into the reduction system
  \begin{align*}
    &S_1 = (W\Lambda,q\Lambda W)\quad
    S_2 = (W\Ls,\qb\Ls W)\quad
    S_3 = (\Ws\Ls,q\Ls\Ws)\\
    &S_4 = (\Ws\Lambda,\qb\Lambda\Ws)\quad
    S_5 = (\Lambda\Ls,\mid)\quad
    S_6 = (\Ls\Lambda,\mid)\\
    &S_7 = (W\Ws,z\Lambda+\zb\Ls+\mu\mid)\quad
    S_8 = (\Ws W,-\zb\Lambda-z\Ls+\mu\mid),
  \end{align*}
  and a compatible ordering is chosen as follows: if two words are of
  different total order (in $W,\Ws,\Lambda,\Ls$) then the one with
  lower order is smaller than the one with higher order. If two words
  are of the same order, they are comparable if the orders in
  $W,\Ws,\Lambda,\Ls$ are separately equal. Then the ordering is
  lexicographic with respect to the alphabet $\Lambda,\Ls,W,\Ws$.
  With this ordering one easily checks that $p_i\geq q_{ij}$, where
  $S_i=(p_i,\sum_{j}q_{ij})$. Furthermore, this ordering has the
  descending chain condition.

  There are several ambiguities to be checked in this reduction
  system. For instance, let us consider $W\Ws\Lambda$. One needs to
  check that it reduces to the same expression if we use $S_7$ to
  replace $W\Ws$ or $S_4$ to replace $\Ws\Lambda$. One computes
  \begin{align*}
    &\paraa{z\Lambda+\zb\Ls+\mu\mid}\Lambda-
    W\paraa{\qb\Lambda\Ws}=
    \paraa{z\Lambda+\zb\Ls+\mu\mid}\Lambda-q\qb\Lambda W\Ws\\
    &\qquad = \paraa{z\Lambda+\zb\Ls+\mu\mid}\Lambda
    -\Lambda\paraa{z\Lambda+\zb\Ls+\mu\mid} = 0.
  \end{align*}
  The other ambiguities can also be checked to be resolvable. Hence,
  by the Diamond Lemma, a basis for the algebra is provided by the
  irreducible words. Denoting $(\Ls)^n=\Lambda^{-n}$ the irreducible
  words are given by $\Tm$ and $\Sn$. To prove the product formulas,
  one simply uses the relations to reorder the expressions.
\end{proof}

\subsection{Scaling limits}

\noindent Let us now show that the algebras defined above have two
formal scaling limits reproducing the fuzzy sphere and the
standard non-commutative torus, as well as their finite dimensional
representations.

For the sphere, one introduces $\Xt=X/\eps$, $\Yt=Y\eps$, $\Zt=Z/\eps$
as well as $k=\hbar/\eps$. In terms of the rescaled variables,
relations (\ref{eq:XYCom})--~(\ref{eq:ZXCom}) become
\begin{align*}
  &[\Xt,\Yt] = ik\Zt\\
  &[\Yt,\Zt] = -2ik\mu\Xt+ik\eps^2\Phi(\Xt,\Yt)\\
  &[\Zt,\Xt] = -2ik\mu\Yt+ik\eps^2\Phi(\Yt,\Xt),
\end{align*}
where $\Phi(X,Y)=2X^3+XY^2+Y^2X$. For $-1<\mu<0$ one finds an algebra
isomorphic to su(2) as $\eps\to 0$. In \cite{abhhs:noncommutative},
the non-zero matrix elements of $W=X+iY$, in an $N$-dimensional irreducible
representation, were found to be
\begin{align*}
  W_{l,l+1} = \sqrt{\frac{2\sin(\pi l\theta)\sin\pi(n-l)\theta}{\cos\pi\theta}}
\end{align*}
for $l=1,\ldots,N-1$. Setting $\Wt=\Xt+i\Yt$ and $k=\tan\thetat$, one
finds that $\theta=\eps\thetat$ for small $\eps$, and
\begin{align*}
  \Wt_{l,l+1} \approx \frac{1}{\eps}\sqrt{2\pi^2\eps^2l(n-l)\thetat^2}
  \quad\to\quad \sqrt{2}\pi\thetat\sqrt{l(n-l)},
\end{align*}
which reproduces a standard representation of su(2).

To obtain the non-commutative torus, one considers relations
(\ref{eq:arel1})--~(\ref{eq:arel5}) and introduces $\Wt=\eps W$ and
sets $\mu=1/\eps^2$. Relations (\ref{eq:arel4}) and (\ref{eq:arel5})
then become
\begin{align*}
  &\Wt\Wt^\ast=\eps^2\paraa{z\Lambda+\zb\Ls}+\mid\\
  &\Wt^\ast\Wt=\eps^2\paraa{-z\Lambda-\zb\Ls}+\mid
\end{align*}
which reduces to the fact that $\Wt$ is unitary as $\eps\to
0$. Clearly, equations (\ref{eq:arel1})--(\ref{eq:arel3}) are
invariant under this rescaling. The corresponding non-zero matrix
elements are
\begin{align*}
  &W_{N,1} = \sqrt{\mu+\frac{1}{\cos\theta}}\\
  &W_{l,l+1} = \sqrt{\mu+\frac{\cos2\pi l\theta}{\cos\theta}}
\end{align*}
for $l=1,\ldots,N-1$, which implies that
\begin{align*}
  &\Wt_{N,1} = \eps\sqrt{\frac{1}{\eps^2}+\frac{1}{\cos\theta}}\quad\to\quad 1\\
  &\Wt_{l,l+1} = \eps\sqrt{\frac{1}{\eps^2}+\frac{\cos2\pi l\theta}{\cos\theta}}
  \quad\to\quad 1
\end{align*}
as $\eps\to 0$. Even without any rescaling, it holds that $\Lambda$ is
a matrix with the $N$ roots of unity on the diagonal. This reproduces
the finite dimensional representations of the non-commutative torus.

\section{Relation to the standard non-commutative torus} 

\noindent Let $U,V$ be the generators of the non-commutative torus
$C_\theta$ \cite{c:cstardiff,c:ncgbook,cr:yangmillstori}; i.e. the
universal $C^\ast$-algebra generated by the relations
\begin{align*}
  &VU = e^{i2\pi\theta}UV\\
  &U^\ast U=UU^\ast = \mid\\
  &V^\ast V=VV^\ast = \mid.
\end{align*}

\noindent In what follows, we will show that one can map $\Azmut$ into
$C_\theta$ and use the induced norm to complete $\Azmut$ to a
$C^\ast$-algebra isomorphic to $C_\theta$. Let us start by proving a
result about the spectrum of a particular element in $C_\theta$, that
is used to construct a $\ast$-homomorphism from $\Azmut$ to
$C_\theta$.

\begin{lemma}\label{lemma:invelement}
  If $|\mu\cos\pi\theta|>1$ and $\mu>0$ then the element
  $\mu\mid+ze^{i\pi\varphi}U+\zb e^{-i\pi\varphi}U^\ast$, with
  $z=e^{i\pi\theta}/2i\cos\pi\theta$, is positive and invertible in
  $C_\theta$ for all $\varphi\in\reals$.
\end{lemma}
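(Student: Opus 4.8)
The plan is to write the element as a norm-controlled self-adjoint perturbation of a positive multiple of the identity and read off positivity and invertibility from the spectrum. Set $A=\mu\mid+B$ with $B=ze^{i\pi\vphi}U+\zb e^{-i\pi\vphi}\Us$. Since $\para{ze^{i\pi\vphi}U}^\ast=\zb e^{-i\pi\vphi}\Us$, the element $B$ is self-adjoint, and hence so is $A$. Thus it suffices to show that the spectrum $\sigma(A)$ is contained in $(0,\infty)$, which for a self-adjoint element of a $C^\ast$-algebra is exactly the statement that $A$ is positive and invertible.

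First I would compute the modulus of the coefficient: from $z=e^{i\pi\theta}/(2i\cos\pi\theta)$ one reads off $\abs{z}=1/\paraa{2\abs{\cos\pi\theta}}$, so that $2\abs{z}=1/\abs{\cos\pi\theta}$. Next, since $U$ is unitary we have $\norm{U}=\norm{\Us}=1$, and the triangle inequality gives $\norm{B}\leq\abs{z}\,\norm{U}+\abs{\zb}\,\norm{\Us}=2\abs{z}=1/\abs{\cos\pi\theta}$. The hypotheses $\mu>0$ and $\abs{\mu\cos\pi\theta}>1$ give $\mu\abs{\cos\pi\theta}>1$, i.e. $1/\abs{\cos\pi\theta}<\mu$, and therefore $\norm{B}<\mu$.

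Finally, because $B$ is self-adjoint its spectrum is real and contained in $[-\norm{B},\norm{B}]$, and since $\sigma(A)=\mu+\sigma(B)$ we obtain $\sigma(A)\subseteq[\mu-\norm{B},\mu+\norm{B}]\subseteq(0,\infty)$ thanks to $\norm{B}<\mu$. This proves the claim. There is in fact no serious obstacle here: the only points requiring care are the computation $\abs{z}=1/\paraa{2\abs{\cos\pi\theta}}$ and the observation that the crude bound $\norm{B}\leq 2\abs{z}$, combined with the \emph{strict} hypothesis, already forces $\norm{B}<\mu$. The one place where something more would be needed is if one wanted to see that the condition $\abs{\mu\cos\pi\theta}>1$ is sharp: then I would note that $B$ lies in the commutative $C^\ast$-subalgebra generated by the single unitary $U$, that the gauge automorphism $U\mapsto\lambda U,\ V\mapsto V$ (for $\lambda$ on the unit circle) preserves the defining relations of $C_\theta$ and hence forces the spectrum of $U$ to be the full circle, and that under the Gelfand transform $B$ becomes the function $w\mapsto 2\operatorname{Re}\para{ze^{i\pi\vphi}w}$ with minimum exactly $-2\abs{z}$. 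This yields $\norm{B}=2\abs{z}$ and optimality of the bound, but for the stated positivity it is not required.
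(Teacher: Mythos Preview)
Your proof is correct and follows essentially the same approach as the paper: both establish self-adjointness, bound $\norm{B}\leq 2\abs{z}=1/\abs{\cos\pi\theta}<\mu$ via the triangle inequality and the hypothesis, and conclude that the spectrum lies in $(0,\infty)$. The only cosmetic difference is that the paper phrases the last step as invertibility of $(\mu-\lambda)\mid-B$ for each $\lambda\leq 0$ via the Neumann-series criterion, whereas you invoke $\sigma(A)=\mu+\sigma(B)\subseteq[\mu-\norm{B},\mu+\norm{B}]$ directly; your extra paragraph on sharpness is a pleasant bonus not present in the paper.
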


\begin{proof}
  The element is clearly hermitian and let us write
  \begin{align*}
    \mu\mid+ze^{i\pi\varphi}U+\zb e^{-i\pi\varphi}U^\ast\equiv
    \mu\mid-B.
  \end{align*}
  To study the spectrum, we consider the invertibility of the element
  $(\mu-\lambda)\mid-B$ for different $\lambda$. It is a standard fact
  that this element is invertible if
  $\frac{1}{|\mu-\lambda|}\norm{B}<1$. One computes
  \begin{align*}
    \frac{1}{|\mu-\lambda|}\norm{B}
    =\frac{1}{2|(\mu-\lambda)\cos\pi\theta|}
    \norm{e^{i\pi\varphi}U+e^{i\pi\varphi}\Us}
    \leq \frac{1}{|(\mu-\lambda)\cos\pi\theta|},
  \end{align*}
  which is less than one if $|(\mu-\lambda)\cos\pi\theta|>1$. Since
  $|\mu\cos\pi\theta|>1$ by assumption (and $\mu>0$), it follows that
  $\frac{1}{|\mu-\lambda|}\norm{B}<1$ for all $\lambda\leq 0$. Hence,
  $\mu\mid-B$ is invertible and the spectrum is contained in
  $(0,\infty)$.
\end{proof}

\noindent Thus, it follows from Lemma \ref{lemma:invelement} that if $\mu>0$
is chosen such that $|\mu\cos\pi\theta|>1$ then both $\sqrt{\mu\mid+zU+\zb\Us}$
and its inverse exist in $C_\theta$.
\begin{proposition}
  The map $\phi$, defined by
  \begin{align*}
    &\phi(W) = \parab{\sqrt{\mu\mid+zU+\zb\Us}}V\\
    &\phi(\Lambda) = U,
  \end{align*}
  induces an injective $\ast$-homomorphism from $\Azmut$ to $C_\theta$.
\end{proposition}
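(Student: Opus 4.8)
The plan is to define $\phi$ on the generators and then verify, in two stages, that it descends to an injective $\ast$-homomorphism. First I would set $P = \sqrt{\mu\mid+zU+\zb\Us}$, which exists and is positive and invertible in $C_\theta$ by Lemma~\ref{lemma:invelement} (applied with $\varphi=0$), and record that $\phi(\Ws)=\Vs P$ and $\phi(\Ls)=\Us$ are forced by the $\ast$-structure and the relation $\Ls\Lambda=\Lambda\Ls=\mid$. Because $\Azmut$ is a quotient of the free $\ast$-algebra $\complex\angles{W,\Ws,\Lambda,\Ls}$ by the ideal generated by relations (\ref{eq:arel1})--(\ref{eq:arel5}), any $\ast$-preserving assignment of the generators to elements of $C_\theta$ extends uniquely to a $\ast$-homomorphism provided the images satisfy those five relations. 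So the first real task is to check each relation survives under $\phi$.

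The key computational step is to understand how $P$ interacts with $U$ and $V$. Since $U$ is unitary and $P$ is a (norm-convergent) function of $U$, we have $UP = PU$, so $\phi(\Lambda)$ commutes with $P$. For the conjugation by $V$, I would use $VU = q\, UV$ with $q=e^{2\pi i\theta}$, which gives $V f(U) = f(q U) V$ for any continuous $f$; applied to $f(U)=\mu\mid+zU+\zb\Us$ this yields $V(\mu\mid+zU+\zb\Us) = (\mu\mid+z q U+\zb\qb\Us)V$, and taking square roots, $VP = P' V$ where $P' = \sqrt{\mu\mid+zqU+\zb\qb\Us}$. These two facts are the engine for everything: for instance, to verify (\ref{eq:arel1}), I compute $\phi(W)\phi(\Lambda) = PVU = P(qUV) = qU\cdot PV = q\,\phi(\Lambda)\phi(W)$, using $PU=UP$. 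The relations (\ref{eq:arel2})--(\ref{eq:arel3}) are similarly quick, since $\phi(\Lambda)=U$ is unitary.

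The main obstacle will be relations (\ref{eq:arel4}) and (\ref{eq:arel5}), which are genuinely algebraic rather than formal. For (\ref{eq:arel4}) I compute
\begin{align*}
  \phi(W)\phi(\Ws) = PV\cdot\Vs P = P\Vs V P\cdot(\text{reorder}) = P^2,
\end{align*}
wait---more carefully, $PV\Vs P = PP = P^2 = \mu\mid+zU+\zb\Us$, and since $\phi(\Lambda)=U$, $\phi(\Ls)=\Us$, this is exactly $z\,\phi(\Lambda)+\zb\,\phi(\Ls)+\mu\mid$, so (\ref{eq:arel4}) holds. For (\ref{eq:arel5}) I compute $\phi(\Ws)\phi(W) = \Vs P\cdot PV = \Vs P^2 V = \Vs(\mu\mid+zU+\zb\Us)V$, and then push $V$ through using $VU=qUV$ (equivalently $U V = q^{-1} V U$, so $\Vs U = \bar q\,U\Vs$ reversed); carrying out the conjugation gives $\mu\mid + z\qb\,U + \zb q\,\Us$ up to checking that $\qb z = -\zb$ and $q\zb = -z$, which follows from the explicit value $z=e^{i\pi\theta}/2i\cos\pi\theta$ and $q=e^{2\pi i\theta}$. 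Verifying those two scalar identities is the crux; once they hold, (\ref{eq:arel5}) reduces to $-\zb\,\phi(\Lambda)-z\,\phi(\Ls)+\mu\mid$ as required.

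For injectivity, I would use the basis from Proposition~\ref{prop:basis}. The images $\phi(\Tm)$ and $\phi(\Sn)$ are, up to the positive invertible factors built from $P$ and scalar phases, words of the form $U^{a}V^{b}$ with distinct $(a,b)$, and these are linearly independent in $C_\theta$ (they form part of the standard basis $\{U^aV^b\}$). The technical point is that the $P$-factors do not destroy this independence: since $P$ is a function of $U$ alone, each basis element maps into the closed span of a single ``$V$-degree'' slice, and within that slice the leading $U$-powers are distinct, so a nontrivial finite linear combination cannot map to zero. Thus $\ker\phi=0$ and $\phi$ is injective. I expect the reordering identities for the scalars $z,q$ in (\ref{eq:arel5}) to be the one place where care is needed; everything else is bookkeeping driven by $PU=UP$ and $VP=P'V$.
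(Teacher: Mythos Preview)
Your proposal is correct and follows the same two-step approach as the paper: first verify that the images satisfy relations (\ref{eq:arel1})--(\ref{eq:arel5}) (the paper displays only the check of (\ref{eq:arel4}) explicitly, while you also sketch (\ref{eq:arel5}) via the scalar identities $\qb z=-\zb$ and $q\zb=-z$), then prove injectivity by expanding in the basis of Proposition~\ref{prop:basis}, pushing the $P$-factors through $V$ so that each term lands in a single $V$-degree, and using that the $P$-factor is an invertible function of $U$. One small sharpening: the phrase ``leading $U$-powers are distinct'' is imprecise since $P$ is not polynomial in $U$; the cleaner observation (used implicitly in the paper) is that for fixed $V$-degree $m_2$ the accumulated $P$-factor is the \emph{same} invertible element for every $m_1$, so it cancels and the linear independence of the $U^{m_1}$ forces all coefficients to vanish.
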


\begin{proof}
  First of all, one has to check that the map is well defined,
  i.e. that it respects the relations in $\Azmut$; for instance,
  denoting $R=\mu\mid+zU+\zb\Us$, one computes
  \begin{align*}
    \phi(WW^\ast-z\Lambda-\zb\Ls-\mu\mid)
    &= \sqrt{R}VV^\ast\sqrt{R}-zU-\zb\Us-\mu\mid\\
    &= R - zU-\zb\Us-\mu\mid = 0,
  \end{align*}
  by using the relations in $C_\theta$. The remaining relations are
  checked in a similar way. Now, let us prove that $\phi$ is
  injective. It follows from Proposition \ref{prop:basis} that an
  arbitrary element $a\in\Azmut$ can be written as
  \begin{align*}
    a = \sum_{\mv\in\integers\times\integers_{\geq 0}}a_{\mv}\Lambda^{m_1}W^{m_2}
    +\sum_{\nv\in\integers\times\integers_{\geq 1}}b_{\nv}\Lambda^{n_1}(\Ws)^{n_2}
  \end{align*}
  (where all but a finite number of coefficients are zero) which
  implies that
  \begin{align*}
    \phi(a) = \sum_{\mv}a_{\mv}U^{m_1}(\sqrt{R}V)^{m_2}
    +\sum_{\nv}b_{\nv}U^{n_1}(V^\ast\sqrt{R})^{n_2}.
  \end{align*}
  Note that Lemma \ref{lemma:invelement} implies that
  $R(q^k)=\mu\mid+zq^kU+\zb q^k\Us$ is positive and invertible for all
  $k\in\integers$, which in particular implies that
  $(\sqrt{R})V=V\sqrt{R(\qb)}$ and $V^\ast\sqrt{R}
  = \paraa{\sqrt{R(\qb)}}V^\ast$. Thus, one concludes that
  \begin{align*}
    \phi(a) = \sum_{\mv}a_{\mv}\parad{\prod_{k=0}^{m_2-1}\sqrt{R(\qb^{k})}}U^{m_1}V^{m_2}
    +\sum_{\nv}b_{\nv}\parad{\prod_{k=1}^{n_2}\sqrt{R(\qb^k)}}U^{n_1}(V^\ast)^{n_2}.
  \end{align*}
  Since elements of the form
  \begin{align*}
    \sum_{\mv\in\integers\times\integers}c_{\mv}U^{m_1}V^{m_2}
  \end{align*}
  form a basis of a dense subset of $C_\theta$, it follows that if
  $\phi(a)=0$ then one must have $a_{\mv}=b_{\nv}=0$ for all $\mv$ and
  $\nv$. Hence, $a=0$, which proves that $\phi$ is injective.
\end{proof}

\noindent Since $\phi$ is injective, one can define a $C^\ast$-norm on
$\Azmut$ by setting $\norm{a}=\norm{\phi(a)}$ for all $a\in\Azmut$,
and by $\Amutheta$ we denote the completion of $\Azmut$ in this norm.
Moreover, $\phi$ can be extended to $\Amutheta$ by continuity, and (by
a slight abuse of notation) we shall also denote the extended map by
$\phi$.
\begin{proposition}
  The map $\phi:\Amutheta\to C_\theta$ is an isomorphism of $C^\ast$-algebras.
\end{proposition}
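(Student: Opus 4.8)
The plan is to show that the injective $\ast$-homomorphism $\phi:\Amutheta\to C_\theta$ is surjective, since an injective $\ast$-homomorphism between $C^\ast$-algebras is automatically isometric, and we have defined the norm on $\Amutheta$ precisely so that $\phi$ is isometric by construction. Thus the only thing that remains is surjectivity, after which $\phi$ will be a bijective, norm-preserving $\ast$-homomorphism, hence an isomorphism of $C^\ast$-algebras.

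To establish surjectivity, I would argue that the image of $\phi$ is a $C^\ast$-subalgebra of $C_\theta$ that contains the generators $U$ and $V$. First, $\phi(\Lambda)=U$ lies in the image directly. Second, by Lemma~\ref{lemma:invelement} the element $R=\mu\mid+zU+\zb\Us$ is positive and invertible, so $\sqrt{R}$ and its inverse $(\sqrt{R})^{-1}$ both lie in $C_\theta$; moreover, since $\sqrt{R}$ is obtained from $U$ and $U^\ast$ by continuous functional calculus, it lies in the $C^\ast$-subalgebra generated by $U$. The key point is that this subalgebra is contained in $\overline{\phi(\Amutheta)}$: indeed $\phi(\Lambda)=U$ and $\phi(\Ls)=U^\ast$ lie in the image, and since $\phi(\Amutheta)$ is already norm-closed (being the continuous image of a complete space under an isometry, hence complete, hence closed), it is a $C^\ast$-subalgebra, so $\sqrt{R}$ belongs to it. Consequently $V=(\sqrt{R})^{-1}\phi(W)=(\sqrt{R})^{-1}\sqrt{R}\,V$ lies in $\phi(\Amutheta)$ as well.

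Having shown that both $U$ and $V$ lie in the closed image $\phi(\Amutheta)$, and since $U,V$ generate $C_\theta$ as a $C^\ast$-algebra, it follows that $\phi(\Amutheta)=C_\theta$, giving surjectivity. Combined with injectivity (Proposition preceding the statement) and the isometry property, this makes $\phi$ a $\ast$-isomorphism of $C^\ast$-algebras.

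The step I expect to require the most care is verifying that $\phi(\Amutheta)$ is genuinely norm-closed and is a $C^\ast$-subalgebra, so that functional calculus applied to $U$ stays inside the image; this is what licenses the recovery of $\sqrt{R}$ and hence of $V$. The underlying fact that makes this clean is that an injective $\ast$-homomorphism of $C^\ast$-algebras is isometric, so $\phi(\Amutheta)$ is the isometric image of a complete space and is therefore automatically complete and closed. One should also confirm that $\phi(\Ws)=\Vs\sqrt{R}$ together with $\phi(W)=\sqrt{R}\,V$ indeed yields $U^\ast=\phi(\Ls)$ in the image, but this is immediate from $\phi$ being a $\ast$-homomorphism with $\phi(\Lambda)=U$.
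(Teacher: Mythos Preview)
Your proof is correct and rests on the same idea as the paper's: Lemma~\ref{lemma:invelement} guarantees that $R=\mu\mid+zU+\zb\Us$ is positive and invertible, so $V$ can be recovered from $\phi(W)=\sqrt{R}\,V$. The only difference is packaging---the paper writes down an explicit inverse $\phi^{-1}$ by setting $\phi^{-1}(U)=\Lambda$ and $\phi^{-1}(V)=(\mu\mid+z\Lambda+\zb\Ls)^{-1/2}W$ (using the analogue of Lemma~\ref{lemma:invelement} inside $\Amutheta$), whereas you argue surjectivity by showing the isometric image $\phi(\Amutheta)$ is a closed $C^\ast$-subalgebra containing $U$ and $V$; your route avoids checking that $\phi^{-1}$ respects the relations of $C_\theta$, at the cost of invoking closedness of the image.
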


\begin{proof}
  As in Lemma \ref{lemma:invelement}, one can show that
  $\mu\mid+z\Lambda+\zb\Ls$ is positive and invertible in
  $\Amutheta$. Hence, one constructs the inverse of $\phi$ by
  setting
  \begin{align*}
    &\phi^{-1}(V) = \frac{1}{\sqrt{\mu\mid+z\Lambda+\zb\Ls}}W\\
    &\phi^{-1}(U) = \Lambda.
  \end{align*}
  (which is easily shown to be a well defined map) and extending it as a
  $\ast$-homomorphism through continuity.
\end{proof}

\subsection{Projective modules}

\noindent In \cite{abhhs:noncommutative} all finite-dimensional
hermitian $\ast$-rep\-resen\-tations of $\Chmu$ were constructed and
classified.  It was found that, in the case of algebras related to
tori, the parameter $\theta$ has to be a rational number for finite
dimensional representations to exist; which is in the same spirit as
for $C_\theta$. For the sake of comparison, let us see how the
standard projective modules of the non-commutative torus can be
presented for $\Amutheta$.

Let $\ximn$ be the vector space $\S(\reals\times\integers_n)$,
i.e. the space of Schwartz functions in one real variable $x$ and
one discrete variable $k\in\integers_n$. By defining
\begin{align}
  &\paraa{\phi W}(x,k) = W(x,k)\phi(x-\eps,k-1)\\
  &\paraa{\phi W^*}(x,k) = W(x+\eps,k+1)\phi(x+\eps,k+1)\\
  &\paraa{\phi\Lambda}(x,k)=e^{2\pi i(x-mk/n)}\phi(x,k)\\
  &\paraa{\phi\Ls}(x,k)=e^{-2\pi i(x-mk/n)}\phi(x,k),
\end{align}
where $\eps=(m+n\theta)/n$ and
\begin{align}
  W(k,x)=\parac{\mu+\frac{\sin\paraa{2\pi(x-mk/n)-\pi\theta}}{\cos\pi\theta}}^{\frac{1}{2}},
\end{align}
one can check that $\ximn$ becomes a right $\Azmut$ module.

The standard derivations on $C_\theta$, defined by
\begin{align*}
  &\d_1 U = iU \qquad \d_2 U = 0\\
  &\d_1 V = 0\qquad \d_2V = iV,
\end{align*}
and extended to the smooth part of $C_\theta$, can be pulled back to
the smooth part of $\Amutheta$ (defined as the inverse image of the
smooth part of $C_\theta$) giving
\begin{align*}
  &\d_1\Lambda = i\Lambda\qquad \d_2\Lambda = 0\\
  &\d_1W = i(z\Lambda-\zb\Ls)\paraa{\mu\mid+z\Lambda+\zb\Ls}^{-1}W\\
  &\d_2W = iW.
\end{align*}

\noindent Furthermore, a connection may be defined
on the above modules in a standard manner. Namely, the linear
operators $\nabla_1,\nabla_2:\ximn\to\ximn$, given as
\begin{align*}
  &\paraa{\nabla_1\phi}(x,k) = \frac{1}{2\pi}\frac{d\phi}{dx}(x,k)\\
  &\paraa{\nabla_2\phi}(x,k) = \frac{i}{\eps}x\phi(x,k),
\end{align*}
define a connection on $\ximn$, i.e. they fulfill
\begin{align*}\label{eq:connectionDef}
  \nabla_i(\phi\cdot a) = \paraa{\nabla_i\phi}\cdot a+\phi\cdot\paraa{\d_i a}
\end{align*}
for $i=1,2$ and $a$ in the smooth part of $\Amutheta$. One easily
computes that
\begin{align*}
  [\nabla_1,\nabla_2] = \frac{i}{2\pi\eps}\mid,
\end{align*}
i.e. the connection has constant curvature.

\bibliographystyle{alpha}
\bibliography{deformednctori}

\end{document}